\spnewtheorem{observation}[lemma]{Observation}{\bfseries}{\itshape}
\title{Lower Bounds for Non-Adaptive\\ Shortest Path Relaxation}
\author{David Eppstein}
\authorrunning{D. Eppstein}
\institute{Computer Science Department\\
University of California, Irvine\\
Irvine, CA 92697, USA\\
\email{eppstein@uci.edu}}
\date{ }
\begin{document}
\maketitle  

\begin{abstract}
We consider single-source shortest path algorithms that perform a sequence of relaxation steps whose ordering depends only on the input graph structure and not on its weights or the results of prior steps. Each step examines one edge of the graph, and replaces the tentative distance to the endpoint of the edge by its minimum with the tentative distance to the start of the edge, plus the edge length. As we prove, among such algorithms, the Bellman–Ford algorithm has optimal complexity for dense graphs and near-optimal complexity for sparse graphs, as a function of the number of edges and vertices in the given graph. Our analysis holds both for deterministic algorithms and for randomized algorithms that find shortest path distances with high probability.
\end{abstract}

\section{Introduction}

Dijkstra's algorithm finds shortest paths in directed graphs when all edge weights are non-negative, but the problem becomes more difficult when negative edge weights (but not negative cycles) are allowed. In this case, despite recent breakthroughs on near-linear time bounds for graphs with small integer edge weights~\cite{BerNanWul-FOCS-22}, the best strongly-polynomial time bound for single-source shortest paths remains that of the Bellman–Ford algorithm~\cite{Bel-QAM-58,ForFul-62,Moo-ISST-57}, which takes time $O(mn)$ on graphs with $m$ edges and $n$ vertices, or $O(n^3)$ on dense graphs.

Both Dijkstra's algorithm and the Bellman–Ford algorithm (as well as an unnamed linear-time algorithm for single-source shortest paths in directed acyclic graphs) can be unified under the framework of \emph{relaxation algorithms}, also called \emph{label-correcting algorithms}~\cite{DeoPan-Nw-84}. These algorithms initialize tentative distances $D[v]$ from the source vertex to each other vertex $v$, by setting $D[s]=0$ and $D[v]=+\infty$ for $v\ne s$. Then, they repeatedly \emph{relax} the edges of the graph. This means, that for a given edge $u\to v$, the algorithm updates $D[v]$ to $D[u]+\operatorname{length}(u\to v)$. In Dijkstra's algorithm, each edge $u\to v$ is relaxed once, in sorted order by the tentative distance $D[u]$. In the Bellman–Ford algorithm, an edge can be relaxed many times. The algorithm starts with the tentative distance equal to the correct distance for $s$, but not for the other vertices. Whenever the algorithm relaxes an edge $u\to v$ in the shortest path tree, at a time when $u$ already has the correct distance, the tentative distance to $v$ becomes correct as well. Thus, the goal in designing the algorithm is to perform these distance-correcting relaxations while wasting as little effort as possible on other relaxations that do not correct any distance, and on the overhead in selecting which relaxation to perform.

We would like to prove or disprove the optimality of the Bellman--Ford algorithm among a general class of strongly-polynomial shortest path algorithms, without restricting the types of computation such an algorithm can perform, but such a result appears to remain far out of reach.
Instead, in this work we focus only on relaxation algorithms, asking: how few relaxation steps are needed? Note that, without further assumptions, a shortest path algorithm could ``cheat'', computing a shortest path tree in some other way and then performing only $n-1$ relaxation steps in a top-down traversal of a shortest path tree. To focus purely on relaxation, and prevent such cheating, we consider \emph{non-adaptive relaxation algorithms}, in which the sequence of relaxation steps is determined only by the structure of the given graph, and not on its weights nor on the outcome of earlier relaxation steps. Dijkstra's algorithm is adaptive, but the linear-time DAG algorithm is non-adaptive. Another example of  a non-adaptive algorithm comes from past work on the graphs in which, like DAGs, it is possible to relax every edge once in a fixed order and guarantee that all tentative distances are correct~\cite{HadSch-SIDMA-88}. As usually described, the Bellman–Ford algorithm is adaptive. Its typical optimizations include adaptive rules that disallow repeatedly relaxing any edge $u\to v$ unless the tentative distance to $u$ has decreased since the previous relaxation, and that stop the entire algorithm when no more allowed relaxations can be found. However, its same asymptotic time bounds can be achieved by a non-adaptive version of the Bellman–Ford algorithm, with a \emph{round-robin} relaxation sequence, one that merely repeats $n-1$ rounds of relaxing all edges in the same order per round. A non-adaptive asynchronous distributed form of the Bellman–Ford algorithm is widely used in \emph{distance vector routing} of internet traffic, to maintain paths of minimum hop count between major internet gateways~\cite{RFC-1058}.

\subsection{Known Upper Bounds}

We do not require non-adaptive relaxation algorithms to be round-robin, but we are unaware of any way to take advantage of this extra flexibility. Nevertheless, among round-robin algorithms, there is still freedom to choose the ordering of edges within each round, and this freedom can lead to improved constant factors in the number of relaxation steps performed by the Bellman–Ford algorithm.

Yen~\cite{Yen-75} described a method based on the following idea. Choose an arbitrary linear ordering for the vertices, and partition the edges into two subsets: the edges that are directed from an earlier vertex to a later vertex in the ordering, and the edges that are directed from a later vertex to an earlier vertex. Both of these two edge subsets define directed acyclic subgraphs of the given graph, with the chosen linear ordering or its reverse as a topological ordering. Use a round-robin edge ordering that first relaxes all of the edges of the first subgraph, in its topological order, and then relaxes all of the edges of the second subgraph, in its topological order. If any shortest path is divided into contiguous subpaths that lie within one of these two DAGs, then each two consecutive subpaths from the first and second DAG will be relaxed in order by each round of the algorithm. In the worst case, there is a single shortest path of $n-1$ edges, alternating between the two DAGs, requiring $\lceil n/2\rceil$ rounds of relaxation. For complete directed graphs, this method uses $\bigl(\tfrac12+o(1)\bigr)n^3$ relaxation steps, instead of the $\bigl(1+o(1)\bigr)n^3$ that might be used by a less-careful round-robin method.

As we showed in earlier work~\cite{BanEpp-ANALCO-12}, an additional constant factor savings can be obtained by a randomized algorithm that selects from a random distribution of non-adaptive relaxation sequences, and that obtains a correct output with high probability rather than with certainty. To do so, use Yen's method, but choose the vertex ordering as a uniformly random permutation of the vertices, rather than arbitrarily. In any shortest path tree, each vertex with more than one child reduces the number of steps from the source to the deepest leaf by one, reducing the number of alternations between the two DAGs. For each remaining vertex with one child in the tree, the probability that it lies between its parent and child in the randomly selected ordering is $\tfrac13$, and when this happens, it does not contribute to the bound on the number of alternations. With high probability, the number of these non-contributing vertices is close to one third of the single-child vertices. Therefore, with high probability, the maximum number of alternations between the two DAGs among paths on the shortest path tree is $\bigl(\tfrac23+o(1)\bigr)n$, and an algorithm that uses this method to perform $\bigl(\tfrac13+o(1)\bigr)n^3$ relaxation steps will find the correct shortest paths with high probability.

The worst-case asymptotic time of these methods remains $O(n^3)$ for complete graphs, and $O(mn)$ for arbitrary graphs with $m$ vertices and $n$ edges. Both Yen's method and the randomized permutation method can also be used in adaptive versions of the Bellman–Ford algorithm, with better constant factors and in the randomized case leading to a Las Vegas algorithm rather than a Monte Carlo algorithm, but it is their non-adaptive variants that concern us here.

\subsection{New Lower Bounds}

We provide the following results:
\begin{itemize}
\item Any deterministic non-adaptive relaxation algorithm for single-source shortest paths on a complete directed graph with $n$ vertices must use $\bigl(\tfrac16-o(1)\bigr)n^3$ relaxation steps.
\item Any randomized non-adaptive relaxation algorithm for shortest paths on a complete directed graph with $n$ vertices, that with high probability sets all distances correctly, must use $\bigl(\tfrac1{12}-o(1)\bigr)n^3$ relaxation steps.
\item For any $m$ and $n$ with $n\le m\le2\tbinom{n}{2}$, there exists a directed graph on $m$ edges and $n$ vertices on which any deterministic or high-probability randomized non-adaptive relaxation algorithm for shortest paths must use $\Omega(mn/\log n)$ relaxation steps. When $m=\Omega(n^{1+\varepsilon})$ for some $\varepsilon>0$, the lower bound improves to $\Omega(mn)$.
\end{itemize}

These lower bounds hold even on graphs for which all edges weights are zero and one, for which an adaptive algorithm, Dial's algorithm, can find shortest paths in linear time~\cite{Dia-CACM-69}.

\subsection{Related Work}

Although we are not aware of prior work in the precise model of computation that we use, variants of the Bellman–Ford algorithm have been studied and shown optimal for some other related problems:
\begin{itemize}
\item The $k$-walk problem asks for a sequence of exactly $k$ edges, starting and one vertex and ending at the other, allowing repeated edges. The Bellman–Ford algorithm can be modified to find the shortest $k$-walk between two vertices in time $O(kn^2)$, non-adaptively. In any non-adaptive relaxation algorithm, the only arithmetic operations on path lengths and edge weights are addition and minimization, and these operations are performed in a fixed order. Therefore, the sequence of these operations can be expanded into a circuit, with two kinds of gates: minimization and addition. The resulting $(\min,+)$-circuit model of computation is somewhat more general than the class of relaxation algorithms, because the sequence of operations performed in this model does not need to come from a sequence of relaxation steps. The $k$-walk version of the Bellman–Ford algorithm is nearly optimal in the $(\min,+)$-circuit model: circuit size $\Omega\bigl(k(n-k)n\bigr)$ is necessary~\cite{JukSch-TCS-16}.  However, this $k$-walk problem is different from the shortest path problem, so this bound does not directly apply to shortest paths.
\item Under conditional hypotheses that are standard in fine-grained complexity analysis, the $O(km)$ time of Bellman–Ford for finding paths of at most $k$ steps, for graphs of $m$ edges, is again nearly optimal: neither the exponent of $k$ nor the exponent of $m$ can be reduced to a constant less than one. For large-enough $k$, the shortest path of at most $k$ steps is just the usual shortest path, but this lower bound applies only for choices of $k$ that are small enough to allow the result to differ from the shortest path~\cite{Pol-22}. 
\item Another related problem is the all hops shortest path problem, which asks to simultaneously compute $k$ paths, having distinct numbers of edges from one to a given parameter $k$. Again, this can be done in time $O(km)$ by a variant of the Bellman–Ford algorithm, and it has an unconditional $\Omega(km)$ lower bound for algorithms that access the edge weights only by path length comparisons, as Bellman–Ford does~\cite{GueOrd-IATW-02,CheAns-ICL-04}. Because it demands multiple paths as output, this lower bound does not apply to algorithms that compute only a single shortest path. 
\item Meyer et al.~\cite{MeyNegWei-TAPAS-11} study a version of the Bellman–Ford algorithm, in which edges are relaxed in a specific (adaptive) order. They construct sparse graphs, with $O(n)$ edges, on which this algorithm takes $\Omega(n^2)$ time, even in the average case for edge weights uniformly drawn from a unit interval. This bound applies only to this algorithm and not to other relaxation orders.
\end{itemize}

\section{Deterministic Lower Bound for Complete Graphs}

The simplest of our results, and the prototype for our other results, is a lower bound on the number of relaxations needed by a deterministic non-adaptive relaxation algorithm, in the worst case, on a complete directed graph with $n$ vertices.

\begin{theorem}
\label{thm:complete-deterministic}
Any deterministic non-adaptive relaxation algorithm for single-source shortest paths on a complete directed graph with $n$ vertices must use at least $\bigl(\tfrac16-o(1)\bigr)n^3$ relaxation steps.
\end{theorem}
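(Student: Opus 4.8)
The plan is to exhibit, for any given deterministic non-adaptive relaxation sequence $R$ on the complete directed graph $K_n$, a choice of zero/one edge weights (in fact, a single shortest path to some target vertex) that $R$ fails to relax correctly unless $R$ is long. The key is an adversary argument against a fixed sequence: since the algorithm is non-adaptive, the sequence of relaxed edges $e_1,e_2,\dots,e_L$ is determined in advance, and we get to pick the weights afterwards. I would make the source $s$ and choose weights so that the true shortest-path tree is a single path $s=v_0\to v_1\to\cdots\to v_{n-1}$ of length $n-1$, all other edges being ``expensive'' (weight $1$ on the path edges, and large enough weights, or a $\{0,1\}$ gadget, elsewhere so that no shortcut beats the path). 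For the distance to $v_k$ to be computed correctly, the relaxation sequence must contain the edges $v_0\to v_1$, then later $v_1\to v_2$, then later $v_2\to v_3$, and so on, as an (ordered, not necessarily contiguous) subsequence — this is the standard fact that underlies why Bellman–Ford needs $n-1$ rounds. So the real content is: for which vertex orderings $v_1,\dots,v_{n-1}$ does the fixed sequence $R$ contain the corresponding ``increasing chain'' of $n-1$ edges as a subsequence, and how short can $R$ be if it must contain such a chain for \emph{every} ordering (equivalently, for enough orderings that the adversary can always find a bad one)?

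The main step is a counting/potential argument. For each ordered pair $(u,v)$, let $t_1(u,v)<t_2(u,v)<\cdots$ be the times at which edge $u\to v$ is relaxed. Given a target permutation $\pi$ of the non-source vertices, the time at which $R$ ``completes'' the chain for $\pi$ is obtained greedily: after fixing the time $\tau_{k}$ at which $v_{k-1}\to v_k$ is matched, $\tau_{k+1}$ is the first relaxation of $v_k\to v_{k+1}$ occurring after $\tau_k$. The algorithm is correct only if this greedy matching succeeds for the adversary's chosen $\pi$; so a correct algorithm must succeed for \emph{all} $\pi$. I would then count, over all $\big(n-1\big)!$ permutations, the total number of relaxation steps ``used'' — or, more efficiently, set up an amortized charging where each relaxation step $e_t = (u\to v)$ can be ``the $k$th link of the chain for $\pi$'' only for a limited set of $(\pi,k)$, and show that covering all permutations forces $L \ge \big(\tfrac16-o(1)\big)n^3$. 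The factor $\tfrac16$ should emerge as follows: a chain for $\pi$ has $n-1$ links; the $k$th link is an edge out of $v_{k-1}$, and for the adversary's worst case roughly $n/2$ of the links (on average, or with the right adversary, a constant fraction with the constant giving $\tfrac12\cdot\tfrac13$) genuinely need a ``fresh'' relaxation that cannot be shared, and there are $\Theta(n)$ choices at each of $\Theta(n)$ ``depths'' with an averaging loss of another factor — yielding $\tfrac12\cdot\tfrac13\cdot n\cdot n\cdot n$. Concretely I expect to pick the adversary's permutation round by round: having seen how $R$ behaves, choose each next vertex $v_k$ to be one whose outgoing edges are relaxed as rarely as possible (in the relevant time window), so that the chain is forced to march slowly through $R$; a clean way to formalize this is an entropy/volume argument: the number of distinct ``greedy chains of length $n-1$'' that a sequence of length $L$ can realize is at most some function of $L$, and this must be at least $(n-1)!$, giving $L=\Omega(n^3/\log n)$ trivially and $L\ge(\tfrac16-o(1))n^3$ with a sharper, problem-specific count.

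The hard part will be getting the constant $\tfrac16$ rather than merely $\Omega(n^3)$ or $\Omega(n^3/\log n)$: this needs the adversary to be chosen adaptively against $R$ in a way that is tight, and it needs the charging to lose essentially nothing except the unavoidable $\tfrac12$ (from the alternation/telescoping structure that already shows up in Yen's analysis in the introduction) and an additional $\tfrac13$-type factor (which, as the introduction's discussion of the randomized permutation method hints, is the ``lies between parent and child with probability $\tfrac13$'' phenomenon — here it reappears as the fraction of chain links that cannot be amortized). I would therefore structure the proof as: (i) reduce correctness to containing a greedy chain for the adversary's permutation; (ii) define, for the fixed sequence $R$, a weight on each relaxation step equal to how many ``(permutation, depth)'' pairs it can serve; (iii) show the adversary can always pick a permutation forcing total served weight $\ge(\tfrac16-o(1))n^3$ worth of distinct steps, e.g.\ by a greedy/probabilistic choice of permutation combined with a concentration bound so that the ``$-o(1)$'' absorbs lower-order terms; and (iv) conclude $L\ge(\tfrac16-o(1))n^3$. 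Matching the $o(1)$ error term will require only routine estimates once the main charging inequality is in place.
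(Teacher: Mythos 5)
Your setup coincides with the paper's: fix the non-adaptive sequence $\sigma$, choose $0/1$ weights so that the true shortest-path tree is a single path $s=v_0\to v_1\to\cdots\to v_{n-1}$, and note that correctness forces $\sigma$ to contain the path edges in order as a subsequence. But the quantitative heart of the theorem --- how the adversary picks the path so as to force $\bigl(\tfrac16-o(1)\bigr)n^3$ steps --- is missing, and the concrete mechanisms you propose do not reach it. The permutation-counting/entropy route gives, as you yourself note, only $\Omega(n^3/\log n)$. The link-by-link adaptive adversary you describe (choose each next vertex $v_k$ so that the edges out of $v_{k-1}$ are relaxed as rarely as possible) is too weak: once $v_{k-1}$ is fixed there are only $n-k$ candidate continuations, all leaving $v_{k-1}$, and $\sigma$ can simply list all out-edges of every vertex in a short block after each point in time, so per-link greediness forces only $O(n)$ steps per link and $O(n^2)$ overall. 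Finally, the $\tfrac12\cdot\tfrac13$ heuristic is not backed by any charging argument, and the analogy with the ``lies between parent and child with probability $\tfrac13$'' phenomenon is misleading: in the actual bound the $\tfrac13$ arises from a sum of squares, not from that upper-bound effect.

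The missing idea is a pairing device that keeps \emph{quadratically} many candidate continuations alive at each stage. The paper has the adversary choose only the edges in \emph{even} positions of the path, one at a time, each chosen greedily to maximize the time $s_i$ at which $\sigma$ corrects the distance to its far endpoint; the odd-position edges are then forced, since they merely connect the previous endpoint to the tail of the next even edge. Because the intermediate vertex of that connecting odd edge is still free when the $i$th (even) edge is chosen, the candidates at stage $i$ comprise all $2\binom{n-i+1}{2}$ edges between the $n-i+1$ remaining vertices together with the $n-i+1$ edges out of the current endpoint, and greedy maximality of $s_i$ forces every one of these $(n-i+1)^2$ edges to be relaxed between $s_{i-2}$ and $s_i$ (an unrelaxed candidate would have yielded a larger $s_i$). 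Telescoping gives
\[
|\sigma|\;\ge\;\sum_{i=2,4,6,\dots}(n-i+1)^2\;=\;\frac{n^3-n}{6}.
\]
Without this two-edges-at-a-time greedy choice (or an equivalent way of preserving $\Theta((n-i)^2)$ live candidates per stage), your outline does not yield the stated constant, nor even a cubic bound by the arguments you sketch.
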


\begin{proof}
Fix the sequence $\sigma$ of relaxation steps chosen by any such algorithm. We will find an assignment of weights for the complete directed graph, such that the distances obtained by the relaxation algorithm are not all correct until $\bigl(\tfrac16-o(1)\bigr)n^3$ relaxation steps have taken place. Therefore, in order for the algorithm to be correct, it must make this many steps. For the weights we choose, the shortest path tree will form a single directed path, of $n-1$ edges, starting at the source vertex. In order for the relaxation algorithm to achieve correct distances to all vertices, its sequence of relaxations must include a subsequence consisting of all path edges in order. The weights of these edges are unimportant (because we are considering only non-adaptive algorithms) so we may set all path edges to have weight zero and all other edges to have weight one.

To determine this path, we choose one at a time its edges in even positions: its second, fourth, sixth, etc., edge. These chosen edges include every vertex in the path, so choosing them will also determine the edges in odd positions. When choosing the $i$th edge (for an even number $i$), we make the choice greedily, to maximize the position in $\sigma$ of the step that relaxes this edge and makes its endpoint have the correct distance. Let $s_i$ denote this position, with $s_0=0$ as a base case recording the fact that, before we have relaxed any edges, the source vertex already has the correct distance. Then the length of $\sigma$ is at least equal to the telescoping sum
\[(s_2-s_0) + (s_4-s_2) + (s_6-s_4)+\cdots.\]

When choosing edge $i$, for an even position $i$, there are $i-1$ earlier vertices, whose position in the shortest path is already determined, and $n-i+1$ remaining vertices. Between step $s_{i-2}$ and step $s_i$ of the relaxation sequence $\sigma$, it must relax all $n-i+1$ edges from the last endpoint of edge $i-2$ to one of these remaining vertices, and all $2\tbinom{n-i+1}{2}$ edges between pairs of the vertices that remain to be corrected. For, if it did not do so, there would be an edge that it had not relaxed, and choosing this edge next would cause $s_i$ to be greater; but this would violate the greedy choice of edge $i$ to make $s_i$ as large as possible. Therefore,
\[s_{i}-s_{i-2}\ge (n-i+1)+2\binom{n-i+1}{2} = (n-i+1)^2.\]
 
Summing over all $\lfloor(n-1)/2\rfloor$ choices of edges in even positions gives, as a lower bound on the total number of relaxation steps,
 \[
 \sum_{i=2,4,6,\dots} s_i-s_{i-2}\ge \sum_{i=2,4,6,\dots}(n-i+1)^2= \frac{n^3-n}{6},
 \]
where the closed form for the summation follows easily by induction.
\end{proof}

\section{Randomized Lower Bound for Complete Graphs}

It does not make much sense to consider expected time analysis for non-adaptive algorithms, because these algorithms have a fixed stopping time (determined as a function of the given graph), and we want their output to be correct with high probability rather than in any expected sense. Nevertheless, it is often easier to lower-bound the expected behavior of randomized algorithms, by using Yao's principle~\cite{Yao-FOCS-77}, according to which the expected cost of a randomized algorithm on its worst-case input can be lower bounded by the cost of the best deterministic algorithm against any random distribution of inputs.

In order to convert high-probability time bounds into expectations, we consider randomized non-adaptive algorithms that are guaranteed to produce the correct distances, and we define the \emph{reduced cost} of such an algorithm to be the number of relaxations that it performs until all distances are correct, ignoring any remaining relaxations after that point.

\begin{lemma}
If a randomized non-adaptive relaxation algorithm $\mathcal{A}$ takes $s(G)$ steps on any weighted input graph $G$ and computes all distances from the source vertex correctly with probability $1-o(1)$, then there exists a randomized non-adaptive relaxation algorithm $\mathcal{B}$ that is guaranteed to produce correct distances and whose expected reduced cost, on weighted graphs $G$ with $n$ vertices and $m$ edges, is at most $s(G)+o(mn)$.
\end{lemma}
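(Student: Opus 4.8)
The plan is to obtain $\mathcal{B}$ from $\mathcal{A}$ by \emph{appending} a fixed, always-correct relaxation sequence to a run of $\mathcal{A}$. Concretely, on input $G$ the algorithm $\mathcal{B}$ first performs the (random) non-adaptive relaxation sequence that $\mathcal{A}$ would perform on $G$, and then performs a deterministic round-robin Bellman--Ford tail: $n-1$ further rounds, each relaxing all $m$ edges of $G$ in one fixed order, for $(n-1)m$ additional steps. The tail consults neither the edge weights nor the outcomes of earlier steps, so $\mathcal{B}$ is still non-adaptive, and it performs at most $s(G)+(n-1)m$ steps on $G$.

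First I would check that $\mathcal{B}$ is guaranteed to finish with all distances correct, whether or not $\mathcal{A}$'s portion already achieved this. A relaxation never increases a tentative distance and never drives $D[v]$ below the correct distance to $v$ (using the absence of negative cycles), so at every moment of any execution one has $D[s]=0$ and $D[v]$ at least as large as the correct distance; in particular the event ``all tentative distances are correct'' is monotone and, once it holds, it persists. After $\mathcal{A}$'s portion the array $D$ is therefore a valid overestimate with $D[s]=0$, and the usual Bellman--Ford induction applies to the tail starting from this state: after $k$ tail rounds, $D[v]$ is correct for every $v$ possessing a shortest path of at most $k$ edges. Since $G$ has no negative cycle, every vertex reachable from the source has a simple, hence at most $(n-1)$-edge, shortest path, so after all $n-1$ tail rounds every distance is correct. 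Thus the reduced cost of $\mathcal{B}$ is well defined and is at most $s(G)+(n-1)m$ on every input and every random choice; and on the event that $\mathcal{A}$'s portion already sets all distances correctly, the monotone event ``all correct'' is reached within the first $s(G)$ steps, so the reduced cost is then at most $s(G)$.

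To finish I would take expectations, conditioning on whether $\mathcal{A}$'s portion succeeds. By hypothesis that event has probability $1-o(1)$ and contributes reduced cost at most $s(G)$; the complementary event, of probability $o(1)$, contributes reduced cost at most $s(G)+(n-1)m$. Hence the expected reduced cost is at most $s(G)+o(1)\cdot(n-1)m=s(G)+o(mn)$, as required. There is really no serious obstacle here, only a quantifier to keep straight: the slack in the conclusion is exactly the product of $\mathcal{A}$'s failure probability with the $\Theta(mn)$ cost of the Bellman--Ford tail, which is why the hypothesis must be a failure probability of $o(1)$ — so that $o(1)\cdot\Theta(mn)=o(mn)$ — rather than merely a constant success probability, and why $\mathcal{B}$ need not, and being non-adaptive cannot, ``detect'' a failure of $\mathcal{A}$ and branch to the fallback: the tail is simply run unconditionally, and only the \emph{analysis} of the reduced cost, not the algorithm itself, distinguishes the two cases.
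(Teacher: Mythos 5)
Your proposal is correct and follows essentially the same route as the paper: append a deterministic round-robin Bellman--Ford tail of $O(mn)$ relaxations to $\mathcal{A}$'s sequence, note that with probability $1-o(1)$ the reduced cost stops within $\mathcal{A}$'s portion, and charge the $o(1)$-probability failure event at most the $O(mn)$ tail, giving the $o(mn)$ additive term. Your extra verification that the tail guarantees correctness from any valid overestimate is a fuller writing-out of what the paper leaves implicit, but it is the same argument.
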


\begin{proof}
Construct algorithm $\mathcal{B}$ by using the relaxation sequence from algorithm $\mathcal{A}$, appending onto it the sequence of relaxations from a conventional non-adaptive deterministic Bellman–Ford algorithm. Then with probability $1-o(1)$ the relaxed cost of $\mathcal{B}$ counts only the relaxation sequence from algorithm $\mathcal{A}$, of length $s(G)$. With probability $o(1)$ the relaxed cost extends into the deterministic Bellman–Ford part of the sequence, of length $O(mn)$. Because this happens with low probability, its contribution to the expected reduced cost is $o(mn)$.
\end{proof}

\begin{corollary}
\label{cor:red}
Any lower bound on expected reduced cost is also a valid lower bound, up to an additive $o(mn)$ term, on the number of relaxation steps for a randomized non-adaptive relaxation algorithm that produces correct distances with high probability.
\end{corollary}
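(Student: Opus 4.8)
The plan is to read the Corollary as the Yao-dual, contrapositive form of the preceding Lemma. Suppose that in a later section we have established, for some class of graphs with $n$ vertices (and, in the sparse case, a prescribed number $m$ of edges), a lower bound $L=L(m,n)$ of the form: every non-adaptive relaxation algorithm that is guaranteed to output correct distances has expected reduced cost at least $L$ on its worst-case input from this class. By Yao's principle this is exactly what one obtains by exhibiting a hard distribution of inputs and bounding the expected reduced cost of the best deterministic guaranteed-correct algorithm against it, so we may take this as the operational meaning of ``a lower bound on expected reduced cost.''

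Now let $\mathcal{A}$ be any randomized non-adaptive relaxation algorithm that takes $s(G)$ steps on input $G$ and computes all distances correctly with probability $1-o(1)$. Apply the Lemma to $\mathcal{A}$ to obtain a guaranteed-correct randomized non-adaptive algorithm $\mathcal{B}$ whose expected reduced cost on every $n$-vertex, $m$-edge graph $G$ is at most $s(G)+o(mn)$. By the assumed lower bound applied to $\mathcal{B}$, there is some graph $G^\star$ in the class with
\[
L \;\le\; \mathbb{E}\bigl[\text{reduced cost of }\mathcal{B}\text{ on }G^\star\bigr] \;\le\; s(G^\star)+o(mn).
\]
Hence $s(G^\star)\ge L-o(mn)$, so the worst-case number of relaxation steps of $\mathcal{A}$ on the class is at least $L-o(mn)$, which is the claim.

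The only points requiring care are bookkeeping ones. First, the additive $o(mn)$ error in the Lemma is uniform over all graphs with the given $m$ and $n$, so it does not interact badly with taking a worst case over a distribution supported on such graphs; one simply absorbs it, together with any $o(mn)$ slack coming from the Yao averaging, into a single $o(mn)$ term. Second, the low-probability Bellman--Ford tail that powers the Lemma contributes $O(mn)\cdot o(1)=o(mn)$ to the expectation regardless of how badly $\mathcal{A}$ behaves on its failure event, so no additional assumption on $\mathcal{A}$ beyond the stated $1-o(1)$ success probability is needed. I do not anticipate a genuine obstacle here: the Corollary is essentially a change of viewpoint, letting us prove all subsequent randomized lower bounds by the technically more convenient route of bounding the reduced cost of deterministic algorithms against adversarial input distributions.
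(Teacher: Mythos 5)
Your argument is correct and is exactly the intended one: the paper states the corollary without proof as an immediate consequence of the preceding lemma, and your contrapositive application of that lemma (pass from $\mathcal{A}$ to the guaranteed-correct $\mathcal{B}$, apply the assumed expected-reduced-cost lower bound to $\mathcal{B}$, absorb the Bellman--Ford tail into the $o(mn)$ term) is precisely that consequence spelled out. The only minor looseness is phrasing the hard instance as a graph $G^\star$ rather than a worst-case weight assignment on the fixed graph, but since $s(G)$ depends only on the graph structure this does not affect the argument.
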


With this conversion to expected values in hand, we may now formulate Yao's principle as it applies to our problem. We need the following notation:

\begin{definition}
For any graph $G$, with a specified source vertex, let $W_G$ be the family of assignments of real weights to edges of~$G$. Let $\mathcal{D}_G$ be the family of probability distributions of weights in $W_G$, and let $\Sigma_G$ be the class of relaxation sequences on $G$ that are guaranteed to produce correct distances from the specified source vertex. For any randomized non-adaptive relaxation algorithm $\mathcal{A}$ and weight vector $w\in W_G$, let $r_G(\mathcal{A},w)$ denote the expected reduced cost of running algorithm $\mathcal{A}$ on $G$ with edges weighted by $w$. For $\sigma\in\Sigma_G$ and $D\in\mathcal{D}_G$ let $\rho_G(\sigma,D)$ be the expected reduced cost of sequence $\sigma$ on weight vectors drawn from $D$. 
\end{definition}

\begin{lemma}[Yao's principle]
\label{lem:yao}
For any graph $G$ with specified source vertex, and any randomized non-adaptive relaxation algorithm $\mathcal{A}$,
\[
\min_\mathcal{A}\max_{w\in W_G} r_G(\mathcal{A},w) = \max_{D\in\mathcal{D}_G}\min_{\sigma\in\Sigma_G}\rho_G(\sigma,D).
\]
\end{lemma}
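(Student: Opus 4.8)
The plan is to read the claimed identity as the minimax equation for the zero-sum game in which the minimizing player picks a relaxation sequence $\sigma\in\Sigma_G$, the maximizing player picks a weight vector $w\in W_G$, and the payoff is the reduced cost of running $\sigma$ on $w$. The first step is to set up the dictionary: a randomized non-adaptive relaxation algorithm that is guaranteed to produce correct distances is precisely a probability distribution over $\Sigma_G$, so the left-hand side is the value of this game when the minimizer commits to a mixed strategy first, while $\mathcal{D}_G$ is exactly the maximizer's set of mixed strategies, so the right-hand side is the value when the maximizer commits first (here one uses that against a fixed $D$ the minimizer has a pure best response, since $\rho_G(\cdot,D)$ is affine in the minimizer's mixed strategy, which is what lets us write $\min_{\sigma\in\Sigma_G}$). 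With this dictionary, the inequality ``$\ge$'' is weak duality and follows at once from linearity of expectation: for any algorithm $\mathcal{A}$, realized as a distribution over sequences, and any $D\in\mathcal{D}_G$, we have $\max_{w\in W_G} r_G(\mathcal{A},w)\ge \mathbb{E}_{w\sim D}\,r_G(\mathcal{A},w)=\mathbb{E}_{\sigma}\,\rho_G(\sigma,D)\ge \min_{\sigma\in\Sigma_G}\rho_G(\sigma,D)$, and taking $\min_{\mathcal{A}}$ and then $\max_D$ finishes this direction.

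The substantive direction ``$\le$'' is the minimax theorem for this game, and the plan is to prepare the ground so that a standard minimax theorem applies. Two reductions tame the strategy spaces. First, multiplying all weights by a common positive constant changes neither the tentative distances, nor the true distances, nor therefore the reduced cost of any sequence, so we may replace $W_G$ by the set of weight assignments in $[-1,1]^m$ having no negative cycle; this set is convex (each constraint $\sum_{e\in C}w_e\ge 0$ over a directed cycle $C$ is preserved under averaging) and compact, so $\mathcal{D}_G$ becomes a compact convex set of distributions. Second, for each fixed $\sigma$ the function $w\mapsto c(\sigma,w)$ is bounded (by $|\sigma|$) and lower semicontinuous, because the set $\{w:c(\sigma,w)\le t\}$ of weight vectors for which the first $t$ relaxations of $\sigma$ already make every tentative distance $D^{(t)}[v]$ equal to the true distance $\operatorname{dist}_w(s,v)$ is closed: it is an intersection over the vertices $v$ of preimages of the diagonal under the continuous maps $w\mapsto\bigl(D^{(t)}[v],\operatorname{dist}_w(s,v)\bigr)$, and once a tentative distance reaches the true distance a relaxation can never move it again, which is why this sublevel set is exactly that intersection. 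Given these facts one can either invoke an infinite-dimensional minimax theorem (Sion's, or the Kneser--Fan form) or reduce all the way to a finite game: the regions on which $c(\sigma,\cdot)$ is constant are unions of cells of a finite arrangement of hyperplanes (comparisons between piecewise-linear tentative and true distances), so once the set of relevant relaxation sequences is known to be finite, finitely many representative weight vectors suffice, and lower semicontinuity of $c$ means the maximizer never gains by placing mass on cell boundaries; on the resulting finite game von Neumann's minimax theorem (equivalently, linear-programming duality) gives the equality, and pulling it back through the two reductions gives the lemma.

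The main obstacle, and the one place the argument is not routine, is the infinitude of the two strategy sets: $W_G$ is a continuum, and, more seriously, $\Sigma_G$ contains relaxation sequences of unbounded length, so unlike the textbook statement of Yao's principle the equality does not follow directly from the finite von Neumann theorem. The real content is a reduction showing that only a finite (equivalently, bounded-length) family of relaxation sequences matters --- here one uses that the round-robin Bellman--Ford sequence lies in $\Sigma_G$ and has reduced cost at most $(n-1)m$ on every input, so the game value is finite and bounded, and one argues that wastefully long sequences never help the minimizer --- together with the checks on semicontinuity and boundedness of the reduced cost above, which ensure that the relevant suprema and infima are well behaved (or, on the Sion route, that its hypotheses hold). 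Everything else --- the game-theoretic dictionary, weak duality, the pure-best-response remark, scaling invariance, and convexity of the no-negative-cycle polytope --- is straightforward.
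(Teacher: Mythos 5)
Your framing is the same as the paper's: the identity is read as the minimax (value-of-the-game) statement for the zero-sum game in which the sequence player mixes over $\Sigma_G$ and the weight player mixes over $W_G$, with reduced cost as payoff. The paper's proof is nothing more than a two-sentence appeal to "the minimax principle for zero-sum games"; you reproduce that dictionary and then go further, and the two halves of your write-up have very different status. The weak-duality direction ($\min_\mathcal{A}\max_w \ge \max_D\min_\sigma$) is complete and correct, and it is worth noting that this is the only direction actually used downstream (the proof of \cref{cor:yao} needs only that an arbitrary algorithm is at least the left side and an arbitrary distribution at most the right side).

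The other direction, as you yourself flag, is where the content lies, and there your argument is a plan rather than a proof. The compactness-after-scaling and lower-semicontinuity observations are fine (and the l.s.c. claim is correct for the reason you give, once negative cycles are excluded), but the two load-bearing reductions are asserted, not established: (a) that only a finite (bounded-length) family of sequences in $\Sigma_G$ matters --- "wastefully long sequences never help the minimizer" controls the game's value but does not by itself let you replace the minimizer's mixed-strategy space by a finite one, and the payoff is unbounded over $\Sigma_G$, which is exactly the situation where naive appeals to the finite von Neumann theorem or to truncation arguments need care; and (b) that finitely many representative weight vectors suffice --- the arrangement you invoke involves tentative-distance functions that depend on $\sigma$ and on the step index, so making it a single finite arrangement requires an extra argument (e.g., that off the zero-sum-cycle hyperplanes only simple-path comparisons can produce equality with the true distance). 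On the Sion/Kneser--Fan route the hypotheses also need checking rather than citing: the classical one-sided-compactness minimax theorems want the appropriate semicontinuity in the compactified player's variable, and here the reduced cost is l.s.c.\ in $w$, which is not automatically the right sense. To be fair, the paper itself proves none of this --- \cref{lem:yao} is stated with a bare citation of the minimax principle, and the paper could have stated only the inequality it uses --- so your attempt is more careful than the source; but as a standalone proof of the stated equality, the "$\le$" half has a genuine gap.
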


\begin{proof}
This is just the minimax principle for zero-sum games, applied to a game in which one player chooses a relaxation sequence $\sigma\in\Sigma_G$, the other player chooses a weight vector $w\in W_G$, and the outcome of the game is the reduced cost for $\sigma$ on $w$. According to that principle, the value of the best mixed strategy for the sequence player, against its worst-case pure strategy (the left hand side of the equality in the lemma) equals the value of the best mixed strategy for the weight player, against its worst-case pure strategy (the right hand side).
\end{proof}

\begin{corollary}
\label{cor:yao}
For any weight distribution $D\in\mathcal{D}_G$, $\min_{\sigma\in\Sigma_G}\rho_G(\sigma,D)$ is a valid lower bound on the expected reduced cost of any randomized non-adaptive relaxation algorithm that is guaranteed to produce correct distances.
\end{corollary}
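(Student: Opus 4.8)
The plan is to derive the corollary directly from \cref{lem:yao} by discarding one side of the minimax identity. First I would fix an arbitrary weight distribution $D\in\mathcal{D}_G$ and observe that, because the right-hand side of the identity in \cref{lem:yao} is a maximum over all distributions in $\mathcal{D}_G$, we have the trivial inequality $\min_{\sigma\in\Sigma_G}\rho_G(\sigma,D)\le\max_{D'\in\mathcal{D}_G}\min_{\sigma\in\Sigma_G}\rho_G(\sigma,D')$. Substituting the identity from \cref{lem:yao} then gives $\min_{\sigma\in\Sigma_G}\rho_G(\sigma,D)\le\min_{\mathcal{A}}\max_{w\in W_G}r_G(\mathcal{A},w)$.

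Next I would unwind the outer minimum on the right-hand side: for any particular randomized non-adaptive relaxation algorithm $\mathcal{A}$ that is guaranteed to produce correct distances, $\min_{\mathcal{A}'}\max_{w\in W_G}r_G(\mathcal{A}',w)\le\max_{w\in W_G}r_G(\mathcal{A},w)$. Chaining this with the previous inequality yields $\min_{\sigma\in\Sigma_G}\rho_G(\sigma,D)\le\max_{w\in W_G}r_G(\mathcal{A},w)$, which says exactly that $\min_{\sigma\in\Sigma_G}\rho_G(\sigma,D)$ is a lower bound on the worst-case expected reduced cost of $\mathcal{A}$. Since $\mathcal{A}$ and $D$ were arbitrary, this is the claim.

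There is essentially no technical obstacle; the only point that warrants a sentence of care is reading ``the expected reduced cost of any randomized non-adaptive relaxation algorithm'' as the worst-case-over-weights quantity $\max_{w\in W_G}r_G(\mathcal{A},w)$, matching the term appearing inside the minimum on the left-hand side of \cref{lem:yao}, after which the corollary is immediate. I would also note, as motivation for the sections to follow, that combining this with \cref{cor:red} shows it suffices to exhibit a single hard distribution $D$ and to lower-bound $\min_{\sigma\in\Sigma_G}\rho_G(\sigma,D)$ in order to lower-bound, up to an additive $o(mn)$ term, the number of relaxation steps of any high-probability randomized non-adaptive algorithm.
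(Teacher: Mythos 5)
Your proposal is correct and is essentially the paper's own argument: the paper likewise observes that any fixed algorithm $\mathcal{A}$ can only attain a value at least the left-hand side of \cref{lem:yao}, that any fixed distribution $D$ can only attain a value at most the right-hand side, and then chains these through the minimax equality. Your version just spells out the two trivial inequalities more explicitly, including the (correct) reading of the bounded quantity as the worst-case-over-weights expected reduced cost $\max_{w\in W_G}r_G(\mathcal{A},w)$.
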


\begin{proof}
An arbitrary algorithm $\mathcal{A}$ can only have a greater or equal value to the left hand side of \cref{lem:yao}, and an arbitrary weight distribution $D$ can only have a smaller or equal value to the right hand side. So the expected reduced cost of the algorithm, on a worst-case input, can only be greater than or equal to the value given for $D$ in the statement of the corollary.
\end{proof}

\begin{theorem}
\label{thm:complete-random}
Any randomized non-adaptive relaxation algorithm for shortest paths on a complete directed graph with $n$ vertices, that with high probability sets all distances correctly, must use at least $\bigl(\tfrac1{12}-o(1)\bigr)n^3$ relaxation steps.
\end{theorem}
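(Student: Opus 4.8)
The plan is to apply Yao's principle via \cref{cor:yao}, together with the reduced-cost conversion of \cref{cor:red}: it suffices to exhibit a single distribution $D\in\mathcal{D}_G$ over weight assignments of the complete directed graph $G$ for which \emph{every} sequence $\sigma\in\Sigma_G$ has expected reduced cost $\rho_G(\sigma,D)\ge\bigl(\tfrac1{12}-o(1)\bigr)n^3$; since here $m=\Theta(n^2)$, the additive $o(mn)$ slack in \cref{cor:red} is negligible. Following the recipe in the proof of \cref{thm:complete-deterministic}, I would let $D$ draw a uniformly random Hamiltonian path $P=(v_0,v_1,\dots,v_{n-1})$ with $v_0=s$, give the $n-1$ edges of $P$ weight zero, and give every other edge weight one. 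Every vertex then lies on $P$ and has distance $0$, and the tentative distance $D[v_k]$ first becomes correct exactly at the step of $\sigma$ that is matched to the $k$th path edge $v_{k-1}\to v_k$ in the leftmost (greedy) embedding of the edge sequence of $P$ into $\sigma$; let $p_k$ be the position of that step, with $p_0=0$. (Such an embedding exists because $\sigma\in\Sigma_G$ must compute correct distances for this weighting, indeed for every Hamiltonian-path weighting.) Thus the reduced cost of $\sigma$ on this weighting equals $p_{n-1}$, and it remains to show $\mathbb{E}_P[p_{n-1}]\ge\bigl(\tfrac1{12}-o(1)\bigr)n^3$ for every fixed $\sigma\in\Sigma_G$.

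To bound $\mathbb{E}_P[p_{n-1}]$ I would reveal $P$ \emph{two vertices at a time} and telescope. Condition on $v_0,\dots,v_{2j}$, which determines $p_0,\dots,p_{2j}$ and leaves a set $R$ of $m=n-1-2j$ unused vertices; then $v_{2j+1}=a$ is uniform in $R$ and $v_{2j+2}=b$ is uniform in $R\setminus\{a\}$. By the greedy embedding, $p_{2j+1}$ is the first occurrence of $v_{2j}\to a$ after $p_{2j}$, hence $p_{2j+1}>p_{2j}$, and $p_{2j+2}$ is the first occurrence of $a\to b$ after $p_{2j+1}$, hence $p_{2j+2}\ge z(a,b)$, where $z(a,b)$ is the first occurrence of $a\to b$ strictly after $p_{2j}$. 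As $(a,b)$ ranges over the $m(m-1)$ ordered pairs of distinct vertices of $R$, the edges $a\to b$ are distinct, so the $z(a,b)$ are $m(m-1)$ distinct positions, each exceeding $p_{2j}$; averaging over the uniform pair,
\[
\mathbb{E}\bigl[\,p_{2j+2}-p_{2j}\ \big|\ v_0,\dots,v_{2j}\,\bigr]\ \ge\ \frac{1}{m(m-1)}\sum_{r=1}^{m(m-1)}r\ \ge\ \frac{m(m-1)}{2}=\frac{(n-1-2j)(n-2-2j)}{2}.
\]
Taking expectations and summing the telescoping differences over $j=0,1,\dots,\lfloor(n-3)/2\rfloor$ (using $p_{n-1}\ge p_{2\lfloor(n-1)/2\rfloor}$ to cover the leftover vertex when $n$ is even) gives
\[
\mathbb{E}_P[p_{n-1}]\ \ge\ \sum_{j\ge0}\frac{(n-1-2j)(n-2-2j)}{2}\ =\ \frac12\sum_{i=2,4,6,\dots}(n-i+1)^2-O(n^2)\ =\ \tfrac{n^3}{12}-o(n^3),
\]
using the closed form for the sum of squares computed in the proof of \cref{thm:complete-deterministic}. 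Then \cref{cor:yao} and \cref{cor:red} yield the theorem.

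I expect the conceptual crux to be the choice of revealing granularity rather than any one estimate. Revealing one vertex at a time forces only the $n-k$ edges leaving the current endpoint to appear between consecutive matched steps, which yields a bound of order $n^2$; revealing three or more at a time still gives only $\Theta(m^2)$ distinct choices for the final revealed edge in each block but reduces the number of blocks, which degrades the leading constant below $\tfrac1{12}$. Two at a time is exactly the sweet spot: the second revealed edge ranges over $\Theta(m^2)$ possibilities whose first occurrences are forced to be spread across $\Theta(m^2)$ positions, so a random such edge costs $\Theta(m^2)$ in expectation, and there remain $\Theta(n)$ blocks to accumulate $\tfrac1{12}n^3$. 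The rest is routine bookkeeping: verifying that the relevant first occurrences are well-defined for every partial path, that distinct ordered pairs give distinct positions in $\sigma$, and that the boundary contributions (the unrevealed last vertex when $n$ is even, and the gap between $\sum_{r=1}^{m(m-1)}r$ and $\tfrac12(m(m-1))^2$) are absorbed into the $o(n^3)$ term.
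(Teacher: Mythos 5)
Your proposal is correct and follows essentially the same route as the paper: Yao's principle (via \cref{cor:yao} and \cref{cor:red}) applied to a uniformly random Hamiltonian-path weighting, revealing the path two vertices at a time and lower-bounding each conditional increment by half the number of ordered pairs of remaining vertices, then telescoping to $n^3/12$. Your explicit ``distinct first occurrences $z(a,b)$'' averaging step is just a slightly more detailed rendering of the paper's packing argument, so there is nothing substantively different to compare.
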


\begin{proof}
We apply \cref{cor:yao} to a weight distribution $D$ defined as follows: we choose a random permutation of the vertices of the given complete graph, starting with the source vertex, we make the weight of edges connecting consecutive vertices in order along this permutation zero, and we make all other weights one. Thus, each weighting of the complete graph drawn from this distribution will have a unique shortest path tree in the form of a single path, with all paths from the source vertex equally likely. For any weight vector $w$ drawn from $D$, let $\pi_w$ be this path.

Let $\sigma$ be any relaxation sequence in $\Sigma_D$. As in the proof of \cref{thm:complete-deterministic}, we define $s_i$ (for a weight vector $w$ to be determined) to be the step at which the second endpoint of the $i$th edge of $\pi_w$ has its shortest path distance set correctly.

Let $C_i$ denote the conditional probability distribution obtained from $D$ by fixing the choice of the first $i$ edges of $\pi_w$. Under condition $C_i$, the remaining $n-i-1$ vertices remain equally likely to be permuted in any order. There are $2\tbinom{n-i-1}{2}$ choices for edge $i+2$, each of which is equally likely. Therefore, the expected value of $s_{i+2}-s_i$ is greater than or equal to the average, among these edges, of their distance along sequence~$\sigma$ from position $s_i$. (It is greater than or equal, rather than equal, because this analysis does not take into account the requirement that edge $i+1$ must be relaxed first, before we relax edge $i+2$.) Sequence $\sigma$ can minimize this average if, in $\sigma$, the next $2\tbinom{n-i-1}{2}$ relaxation steps after $s_i$ are exactly these distinct edges. When $\sigma$ packs the edges in this minimizing way, the average is $2\tbinom{n-i-1}{2}/2$; for other sequences it can only be greater. Therefore,
\[
E[s_{i+2}-s_i\mid C_i]\ge \binom{n-i-1}{2}.
\]
Summing these expected differences, over the sequence of values $s_i$ for even $i$, and applying \cref{cor:red} and \cref{cor:yao}, gives the result.
\end{proof}

\section{Lower Bounds for Incomplete Graphs}

In our lower bounds for complete graphs, the edges in even and odd positions of the shortest paths perform very different functions. The edges in even positions are the ones that, at each step in the shortest path, force the relaxation sequence to have a large subsequence of relaxation steps. Intuitively, this is because there are many possible choices for the edge at the next step and all of these possibilities (in the deterministic bound) or many of these possibilities (in the randomized bound) must be relaxed before reaching the edge that is actually chosen. The edges in odd positions, on the other hand, do not contribute much directly to the length of the sequence of relaxation steps. Instead, they are used to connect the edges in the even positions into a single shortest path.

\begin{figure}[t]
\centering\includegraphics[scale=0.166]{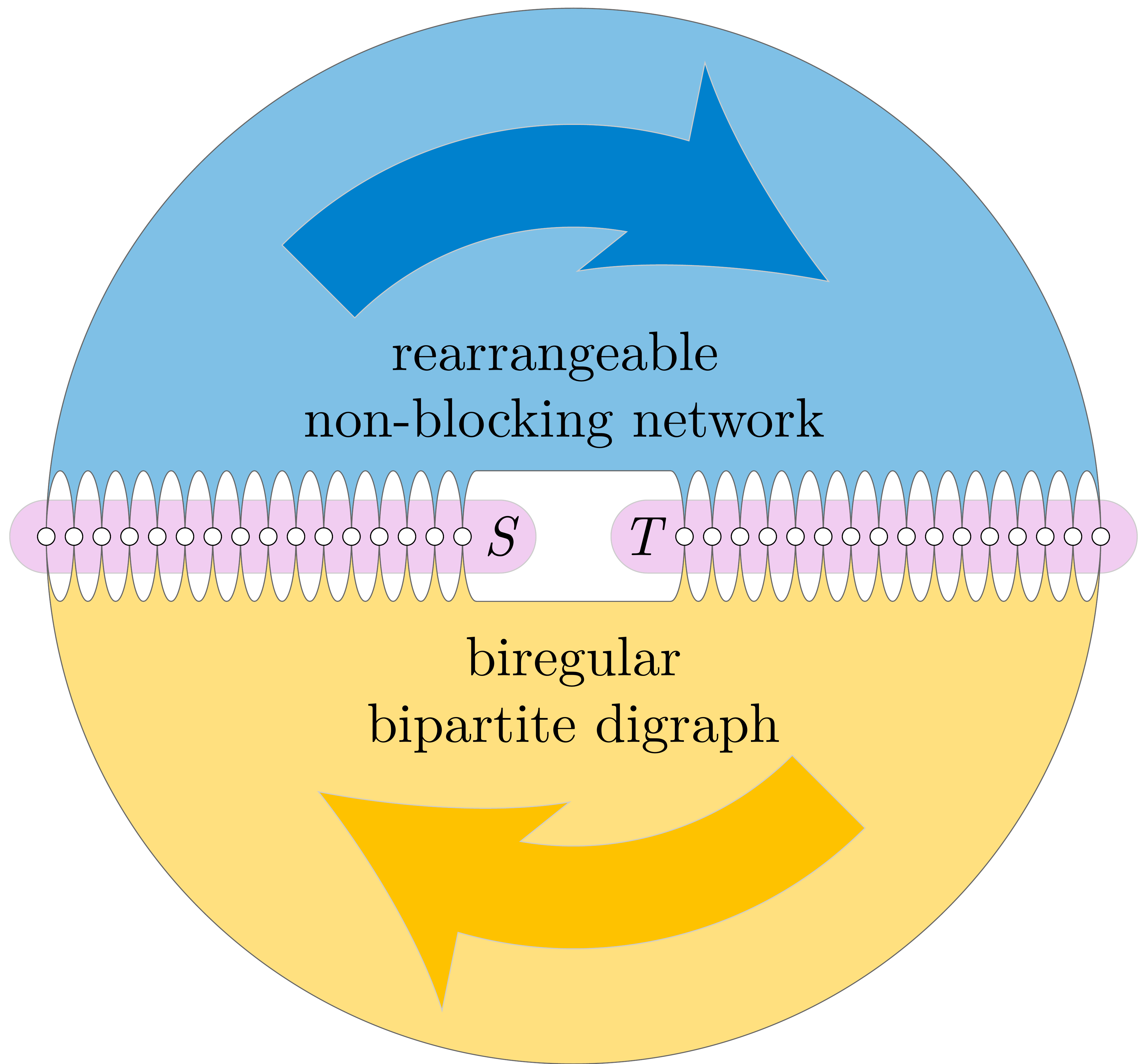}
\caption{Schematic view of the graphs used for our lower bound construction}
\label{fig:schematic}
\end{figure}

To construct graphs that are not complete, for which we can prove analogous lower bounds, we make this dichotomy more explicit. For a chosen ``capacity'' parameter $c$, we will construct graphs that have two designated subsets of $c$ vertices, $S$ and $T$ (with the source vertex contained in subset $S$). We will connect the vertices in $T$ to the vertices in $S$ by a biregular bipartite directed graph of some degree $d\approx m/2c$, a graph in which each vertex in $T$ has exactly $d$ outgoing neighbors and each vertex in $S$ has exactly $d$ incoming neighbors. This biregular graph will perform the function of the even position edges in our complete graph lower bounds: it will have many edges to choose from, forcing any relaxation algorithm to make a long subsequence of relaxations between each two chosen edges. The detailed structure of this graph is not important for our bounds. In the other direction, from $S$ to $T$, we will construct a special graph with the property that, no matter which sequence of disjoint edges we choose from the biregular graph, we can complete this sequence to a path. A schematic view of this construction is depicted in \cref{fig:schematic}. We begin the more detailed description of this structure by defining the graphs we need to connect from $S$ to $T$. The following definition is standard:

\begin{definition}
A \emph{rearrangeable non-blocking network} of capacity $c$ is a directed graph $G$ with $c$ vertices labeled as inputs, and another $c$ vertices labeled as outputs, with the following property. For all systems of pairs of inputs and outputs that include each input and output vertex at most once, there exists in $G$ a system of vertex-disjoint paths  from the input to the output of each pair.
\end{definition}

\begin{observation}
A complete bipartite graph $K_{c,c}$, with its edges directed from $c$ input vertices to $c$ output vertices, is a rearrangeable non-blocking network of capacity $c$, with $2c$ vertices and $c^2$ edges. In this case, the disjoint paths realizing any system of disjoint input-output pairs is just a matching, formed by the edges from the input to the output in each pair.
\end{observation}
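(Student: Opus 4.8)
The plan is to verify the three asserted facts directly, since each is a short structural check. The vertex and edge counts are immediate from the definition of $K_{c,c}$: it has $c$ input vertices together with $c$ output vertices, hence $2c$ vertices in all, and it contains an edge from every input to every output, hence $c^2$ edges, all oriented from inputs to outputs as required by the statement.

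For the rearrangeable non-blocking property, I would fix an arbitrary system of input–output pairs in which each input vertex and each output vertex occurs at most once, and simply exhibit the required vertex-disjoint paths. The natural choice is to route each pair $(u,v)$ along the single edge $u\to v$, which is present because $K_{c,c}$ is complete bipartite and $u$ is an input while $v$ is an output. It then remains to check disjointness: given two distinct pairs $(u,v)$ and $(u',v')$ from the system, we have $u\ne u'$ and $v\ne v'$ since each input and each output is used at most once, and the input set is disjoint from the output set; therefore the edges $u\to v$ and $u'\to v'$ share no vertex. Consequently the chosen edges form a matching, which is in particular a system of vertex-disjoint paths realizing the given pairs, establishing the non-blocking property and the final sentence of the statement.

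There is essentially no genuine obstacle in this argument; the only points that merit a moment of care are confirming that the definition of a rearrangeable non-blocking network only requires disjoint paths for partial (``at most once'') pairings rather than perfect matchings — which is exactly the setting in which single-edge routing works — and that orienting every edge from inputs to outputs is compatible with the requirement that each pair consist of an input and an output in that order.
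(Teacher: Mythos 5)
Your proof is correct and matches the paper's reasoning: the observation is justified exactly by noting that each input--output pair can be routed along its single connecting edge, and the partial-pairing condition forces these edges to form a vertex-disjoint matching, with the vertex and edge counts immediate from the definition of $K_{c,c}$. The paper states this as an observation without further elaboration, so your careful check of disjointness is simply a spelled-out version of the same argument.
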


\begin{lemma}
\label{lem:expander}
For any capacity $c$, there exist rearrangeable non-blocking network of capacity $c$ with $O(c\log c)$ vertices and edges.
\end{lemma}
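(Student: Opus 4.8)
The plan is to build the classical Beneš network and establish its rearrangeability by the looping (Slepian–Duguid) argument. First I would reduce to the case that $c$ is a power of two: given an arbitrary capacity $c$, set $c'=2^{\lceil\log_2 c\rceil}$, so that $c\le c'\le 2c$, build a network of capacity $c'$, and designate any $c$ of its $c'$ inputs and any $c$ of its outputs as the terminals we care about. Since $c'=O(c)$, a network of size $O(c'\log c')=O(c\log c)$ suffices. I would also note that it is enough to route full permutations: any partial system of input--output pairs can be extended arbitrarily to a complete permutation on the $c'$ terminals, routed, and the superfluous paths discarded, leaving vertex-disjoint paths for exactly the original pairs.

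For the construction, let $\mathrm{Ben}(1)$ be a single edge from its one input to its one output, and for $c=2^k\ge 2$ assemble $\mathrm{Ben}(c)$ from three parts in series: a first column of $c/2$ switches, each switch being a directed copy of $K_{2,2}$; two vertex-disjoint copies of $\mathrm{Ben}(c/2)$, called the top and bottom subnetworks; and a last column of $c/2$ switches. Network inputs $2j$ and $2j{+}1$ feed switch $j$ of the first column, whose two outputs go to input $j$ of the top and of the bottom subnetwork respectively, and the output side is wired symmetrically. The vertex and edge counts satisfy $N(c)=2N(c/2)+\Theta(c)$ with $N(1)=\Theta(1)$, which solves to $N(c)=\Theta(c\log c)$.

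The substantive step is that $\mathrm{Ben}(c)$ is rearrangeable. Given a permutation $\pi$ matching input $i$ to output $\pi(i)$, form a bipartite multigraph $H$ whose left vertices are the $c/2$ first-column switches, whose right vertices are the $c/2$ last-column switches, and which has one edge per connection, joining the switch containing its source to the switch containing its destination. Every vertex of $H$ has degree exactly $2$, so $H$ is a disjoint union of even cycles and hence admits a proper $2$-edge-colouring with colours ``top'' and ``bottom''; route each connection through the subnetwork named by its colour. Because the colouring is proper, the two connections entering each first-column switch get opposite colours, so that switch can be set to send one to the top and one to the bottom subnetwork, and likewise each last-column switch can be set consistently. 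Each subnetwork then receives exactly one connection from each first-column switch and must deliver exactly one to each last-column switch, i.e.\ it is asked to realize a permutation on $c/2$ terminals, which it does by induction; disjointness of all paths is immediate since the two subnetworks share no vertices and the switch settings are consistent. The main obstacle --- indeed the only step that is not recursive scaffolding or bookkeeping --- is precisely this $2$-colouring-and-route step. (Alternatively one could cite known expander-based constructions of nonblocking networks of size $O(c\log c)$, but the Beneš construction is self-contained.)
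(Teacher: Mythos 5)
Your proof is correct. Worth noting: the paper does not actually prove this lemma at all --- it defers to the literature, with Pippenger crediting the result to Beizer's recursive construction, and mentions the expander-based networks of Alon and Capalbo as a modern alternative with stronger (incremental, efficient) routing guarantees that the paper explicitly does not need. Your Bene\v{s}-network argument is a self-contained substitute of exactly the same recursive flavor as the construction the paper cites: the size recurrence $N(c)=2N(c/2)+\Theta(c)$ gives the $O(c\log c)$ bound, and the rearrangeability follows from the $2$-edge-colouring of the $2$-regular bipartite multigraph of connections (equivalently the Slepian--Duguid looping argument), which is the degenerate $c=2$ case of the K\H{o}nig edge-colouring step the paper uses later in its proof of the Clos recursion (\cref{lem:clos}). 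Two details you handled that are easy to get wrong and that you got right: modelling each $2\times 2$ switch as a directed $K_{2,2}$ on four vertices (rather than a single vertex), so that the two connections through a switch really are vertex-disjoint as the paper's definition requires; and the reductions to power-of-two capacity and to full permutations, which are needed because the definition quantifies over partial systems of pairs. So your argument fills in a proof the paper leaves to references, rather than diverging from one it gives.
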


Pippenger~\cite{Pip-JCSS-78} credits the proof of \cref{lem:expander} to Beizer~\cite{Bei-SMTA-62}, who used a recursive construction. A more recent construction of Alon and Capalbo~\cite{AloCap-FOCS-07} is based on blowing up an expander graph, producing enough copies of each vertex that a system of edge-disjoint paths in the expander can be transformed into a system of vertex-disjoint paths in the non-blocking network. Their networks are non-blocking in a stronger sense (the vertex-disjoint paths can be found incrementally and efficiently), but we do not need that additional property. A simple counting argument shows that $o(c\log c)$ edges is not possible: to have enough subsets of edges to connect $c!$ possible systems of pairs, the number of edges must be at least $\log_2 c!$.
For non-blocking networks with fewer vertices and more edges we turn to an older construction of Clos~\cite{Clo-BSTJ-53}:

\begin{figure}[t]
\includegraphics[width=\textwidth]{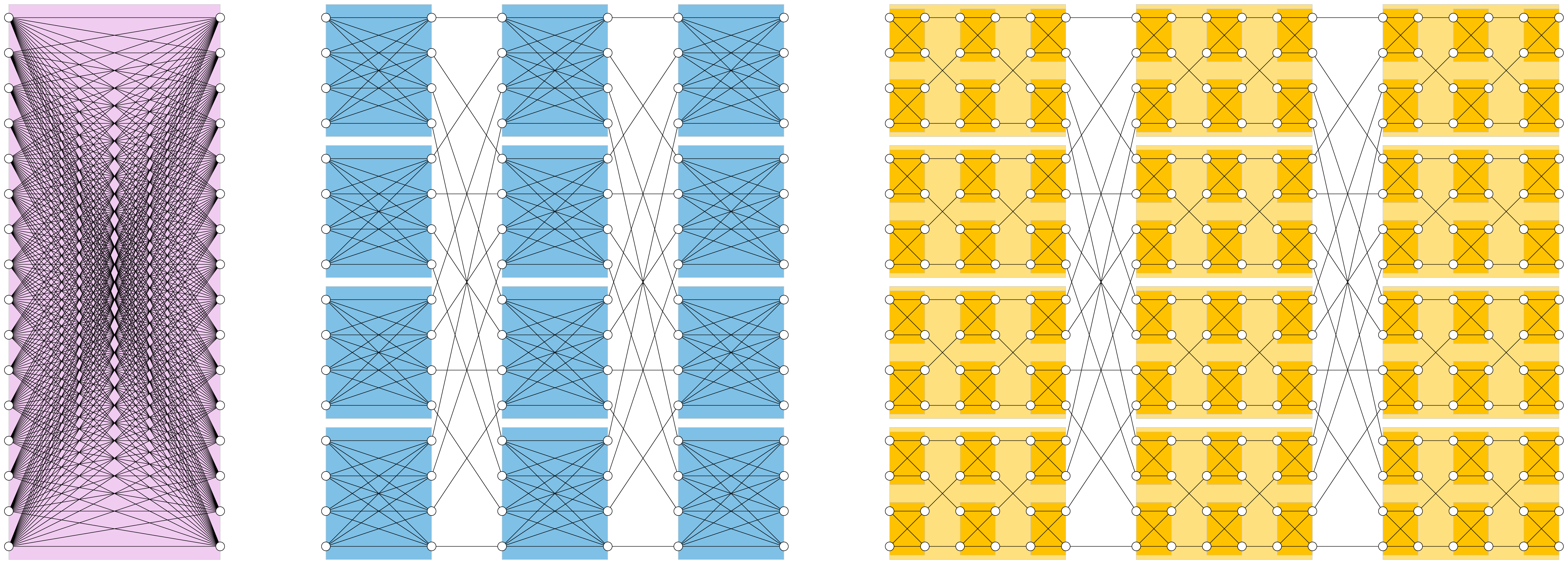}
\caption{Three rearrangeable non-blocking networks of capacity 16.  Each network's input vertices are in its left column and its output vertices are in the right column. Left: Complete bipartite graph. Center: Three-stage Clos network, with pairs of input and output vertices in consecutive stages connected by edges rather than being identified as single vertices. Right: Nine-stage network obtained by expanding each subunit of the center network into a three-stage network.}
\label{fig:clos}
\end{figure}

\begin{lemma}[Clos~\cite{Clo-BSTJ-53}]
\label{lem:clos}
Suppose that there exists a rearrangeable non-blocking network $G_c$ of capacity $c$ with $n$ vertices and $m$ edges.
Then there exists a rearrangeable non-blocking network of capacity $c^2$ with $3cn-2c^2$ vertices and $3cm$ edges.
\end{lemma}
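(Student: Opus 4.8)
The plan is to instantiate the classical three-stage Clos construction with copies of $G_c$ as the switches. Label the $c^2$ inputs of the network to be built as pairs $(i,a)$ and the $c^2$ outputs as pairs $(k,b)$, with $i,a,k,b\in\{1,\dots,c\}$; think of $i$ and $k$ as naming a block of $c$ inputs (resp.\ $c$ outputs). Take $3c$ vertex-disjoint copies of $G_c$, calling them \emph{input switches} $I_1,\dots,I_c$, \emph{middle switches} $M_1,\dots,M_c$, and \emph{output switches} $O_1,\dots,O_c$; the $c$ inputs of $I_i$ are declared to be the network inputs $(i,1),\dots,(i,c)$ and the $c$ outputs of $O_k$ the network outputs $(k,1),\dots,(k,c)$. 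Wire the stages together by identifying output $j$ of $I_i$ with input $i$ of $M_j$, and output $k$ of $M_j$ with input $j$ of $O_k$, for all $i,j,k$. Each identification merges a pair of vertices, and there are $c^2$ identifications between the input and middle stages and $c^2$ between the middle and output stages, while no edges are created or removed. Hence the construction has $3cn-2c^2$ vertices and $3cm$ edges, as claimed, with $c^2$ inputs and $c^2$ outputs.

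It remains to prove the rearrangeable non-blocking property. Fix any request set, that is, a partial matching pairing distinct network inputs $(i,a)$ with distinct network outputs $(k,b)$. Form a bipartite multigraph $H$ with one vertex class $\{I_1,\dots,I_c\}$, the other class $\{O_1,\dots,O_c\}$, and one edge joining $I_i$ to $O_k$ for every request with source block $i$ and destination block $k$. Because the request set uses each of the $c$ inputs of a block at most once, and likewise each output, every vertex of $H$ has degree at most $c$; by König's edge-coloring theorem for bipartite multigraphs, the edges of $H$ admit a proper coloring with $c$ colors. Read the color $j\in\{1,\dots,c\}$ of a request as a decision to route that request through middle switch $M_j$.

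Now route inside each switch using the non-blocking property of $G_c$. For input switch $I_i$, the requests originating in block $i$ are assigned pairwise-distinct source inputs of $I_i$, and, since the coloring is proper, pairwise-distinct colors and therefore pairwise-distinct target outputs of $I_i$; this is exactly a partial matching between inputs and outputs of the copy of $G_c$ sitting at $I_i$, which that copy realizes by vertex-disjoint paths. The identical argument applies at each middle switch $M_j$ — properness of the coloring means at most one request colored $j$ enters from any input switch and at most one leaves to any output switch, so the requests colored $j$ form a partial matching on the terminals of the copy at $M_j$ — and at each output switch $O_k$. For each request, concatenating its routed subpaths inside $I_i$, $M_j$, and $O_k$, glued at the identified inter-stage terminals, yields a path from $(i,a)$ to $(k,b)$. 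Within any single switch these routed paths are vertex-disjoint; two requests passing through different switches at a given stage share nothing there; and each identified inter-stage vertex lies on at most one routed path, since it corresponds to a single (switch, color) pair among the requests. Hence the request paths are globally vertex-disjoint, and the network is rearrangeable non-blocking of capacity $c^2$.

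I expect the only real obstacle to be checking the legality of the middle stage: one must see that a single $c$-edge-coloring of $H$ simultaneously splits the global request set into $3c$ small partial matchings, one per switch, each solvable because $G_c$ is non-blocking, and then that the per-copy vertex-disjointness survives concatenation because the only vertices shared across copies are the identified terminals, each carrying at most one path. The vertex and edge counts are routine.
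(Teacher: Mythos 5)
Your proposal is correct and follows essentially the same route as the paper: the three-stage Clos construction from $3c$ copies of $G_c$ with the stated terminal identifications, K\H{o}nig's edge-coloring of the request multigraph to assign middle switches, and the non-blocking property of each copy to realize the resulting per-switch partial matchings. Your verification of the vertex count and of global vertex-disjointness at the identified inter-stage terminals is simply a more explicit rendering of the argument the paper gives.
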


\begin{proof}
Construct $3c$ copies of $G_c$, identified as $c$ input subunits, $c$ internal subunits, and $c$ output subunits.
The input subunits have together $c^2$ input vertices, which will be the inputs of the whole network. Similarly, the output subunits have together $c^2$ output vertices, which will be the outputs of the whole network. Identify each output vertex of an input subunit with an input vertex of an internal subunit, in such a way that each pair of these subunits has exactly one identified vertex.
Similarly, identify each output vertex of an internal subunit with an input vertex of an output subunit, in such a way that each pair of these subunits has exactly one identified vertex.

An example of this network, for $c=4$ and $G_c=K_{4,4}$, can be seen in an expanded form as the middle network of \cref{fig:clos}. For greater legibility of the figure, instead of identifying pairs of vertices between subunits, these pairs have been connected by added edges. Contracting these edges would produce the network described above.

To produce vertex-disjoint paths connecting any system of disjoint pairs of inputs and outputs, consider these pairs as defining a multigraph connecting the input subunits to the output subunits of the overall network. This multigraph has maximum degree~$c$ (each input or output subunit participates in at most $c$ pairs), and we may apply a theorem of Dénes Kőnig according to which every bipartite multigraph with maximum degree~$c$ has an edge coloring using $c$ colors~\cite{KHo-MA-16}. These colors may be associated with the $c$ internal subunits, and used to designate which internal subunit each path should pass through. Once this designation is made, each subunit has its own system of disjoint pairs of inputs and outputs through which its paths should go, and the paths through each subunit can be completed using the assumption that it is rearrangeable non-blocking.
\end{proof}

\begin{corollary}
\label{cor:clos}
For any constant $\varepsilon>0$ and any integer $c\ge 1$, there exist rearrangeable non-blocking networks of capacity $c$ with $O(c)$ vertices and $O(c^{1+\varepsilon})$ edges.
\end{corollary}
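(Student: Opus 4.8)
The plan is to obtain these networks by iterating the Clos construction of \cref{lem:clos} a constant number of times, starting from a complete bipartite base network whose capacity is a small fixed root of~$c$. We may assume $\varepsilon\le 1$, since for $\varepsilon\ge 1$ the network $K_{c,c}$ already has $2c=O(c)$ vertices and $c^2=O(c^{1+\varepsilon})$ edges. Fix such an $\varepsilon$ and set $t=\lceil\log_2(1/\varepsilon)\rceil$, so that $2^{-t}\le\varepsilon$; the point is that $t$ depends only on $\varepsilon$ and not on~$c$. Given~$c$, let $b=\lceil c^{1/2^t}\rceil$. Then $b^{2^t}\ge c$, while $b\le 2c^{1/2^t}$ (using $c^{1/2^t}\ge 1$), so that $c\le b^{2^t}\le 2^{2^t}c$ and $b^{2^t}=\Theta(c)$ with constants depending only on~$\varepsilon$.

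I would then start from $K_{b,b}$, which is a rearrangeable non-blocking network of capacity~$b$ with $2b$ vertices and $b^2$ edges, and apply \cref{lem:clos} exactly $t$ times in succession. After $j$ applications this produces a network of capacity $c_j=b^{2^j}$; writing $n_j$ and $m_j$ for its numbers of vertices and edges, \cref{lem:clos} gives the recurrences $n_{j+1}=3c_jn_j-2c_j^2$ and $m_{j+1}=3c_jm_j$, with $n_0=2b$ and $m_0=b^2$. Dividing the first recurrence by $c_{j+1}=c_j^2$ shows that the ratio $r_j=n_j/c_j$ satisfies $r_{j+1}=3r_j-2$ with $r_0=2$, hence $r_j=3^j+1$ and $n_t=(3^t+1)\,b^{2^t}$. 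Unrolling the second recurrence gives $m_t=3^t\bigl(\textstyle\prod_{j=0}^{t-1}c_j\bigr)\,m_0=3^t\,b^{2^t-1}\cdot b^2=3^t\,b^{2^t+1}$.

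To finish, I would observe that since $b^{2^t}\ge c$, restricting attention to any $c$ of the input vertices and any $c$ of the output vertices exhibits this network as a rearrangeable non-blocking network of capacity~$c$. Using $c\le b^{2^t}\le 2^{2^t}c$, its vertex count is $n_t=(3^t+1)\,b^{2^t}=O(c)$ and its edge count is $m_t=3^t\,b^{2^t+1}=3^t\bigl(b^{2^t}\bigr)^{1+1/2^t}=O\!\bigl(c^{1+1/2^t}\bigr)=O(c^{1+\varepsilon})$, where all hidden constants depend only on~$\varepsilon$; this would establish the corollary.

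The only delicate point is that $t$ must be held fixed throughout — otherwise the accumulated factors $3^t$ and $2^{2^t}$ would fail to be constants — while the final capacity must nonetheless reach~$c$. This is precisely what forces the base network to have capacity about $c^{1/2^t}$ and brings in the rounding that the ceiling handles above. Beyond this bit of bookkeeping the argument is routine, so I do not anticipate a real obstacle.
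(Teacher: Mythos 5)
Your proposal is correct and is essentially the paper's argument: the paper proves the corollary by induction on $\lceil\log_2 1/\varepsilon\rceil$, recursing on capacity $\lceil\sqrt{c}\rceil$ with parameter $2\varepsilon$ and applying \cref{lem:clos} once per level, which is exactly your iteration of the Clos construction a constant $t=\lceil\log_2(1/\varepsilon)\rceil$ times from a complete bipartite base of capacity about $c^{1/2^t}$. Your explicit recurrence-solving for $n_t$ and $m_t$ and the final restriction to $c$ inputs and outputs just make the paper's carried $O(\cdot)$ bounds and ``delete excess vertices'' step concrete.
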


\begin{proof}
We prove the result by induction on the integer $i=\lceil\log_2 1/\varepsilon\rceil$.
As a base case this is true for $\varepsilon=1$ (for which $i=0$) and for arbitrary $c$, using the complete bipartite graph as the network. For smaller values of $\varepsilon$, apply the induction hypothesis with the parameters $2\varepsilon$ and $\lceil\sqrt c\rceil$, to produce a rearrangeable non-blocking network $N$ of capacity $\lceil\sqrt c\rceil$ with $O(\sqrt c)$ vertices and $O(c^{1/2+\varepsilon})$ edges. Applying \cref{lem:clos} to $N$ produces a rearrangeable non-blocking network of capacity $\ge c$ with $O(c)$ vertices and $O(c^{1+\varepsilon})$ edges, as desired.
Deleting excess vertices to reduce the capacity to exactly $c$ completes the induction.
\end{proof}

\begin{theorem}
For any $m$ and $n$ with $n\le m\le2\tbinom{n}{2}$, there exists a directed graph on $m$ edges and $n$ vertices on which any deterministic or high-probability randomized non-adaptive relaxation algorithm for shortest paths must use $\Omega(mn/\log n)$ relaxation steps. When $m=\Omega(n^{1+\varepsilon})$ for some $\varepsilon>0$, the lower bound improves to $\Omega(mn)$.
\end{theorem}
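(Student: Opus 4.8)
The plan is to realize the schematic picture of \cref{fig:schematic} concretely and then rerun the adversary argument of \cref{thm:complete-deterministic} on the resulting graph. First I would fix a capacity $c$ and a degree $d\le c$, take the vertex set to be that of a rearrangeable non-blocking network $N$ of capacity $c$ — its $c$ inputs forming a set $S$ containing the source, its $c$ outputs forming a set $T$ — together with enough isolated vertices to pad up to $n$, and take the edge set to be $E(N)$ together with a $d$-regular bipartite directed graph $B$ from $T$ to $S$ (each $T$-vertex with out-degree $d$, each $S$-vertex with in-degree $d$; any such $B$ will do), plus arbitrary weight-one edges to pad up to $m$. The concrete choice of $c$, $d$, and $N$ is postponed. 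The key structural fact I would record at the outset is that, by the non-blocking property of $N$, any sequence of vertex-disjoint edges $v_1\to u_1,\dots,v_k\to u_k$ of $B$ (distinct $v_i\in T$, distinct $u_i\in S$) can be spliced together through vertex-disjoint paths of $N$ realizing the pairs $(s,v_1),(u_1,v_2),\dots,(u_{k-1},v_k)$, producing a single simple path $\pi$ from the source; weighting the edges of $\pi$ with $0$ and all other edges with $1$ makes $\pi$ the unique shortest path tree, exactly as in the complete-graph proofs, so that a correct relaxation sequence must relax the edges of $\pi$ in order.

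The heart of the argument then mirrors \cref{thm:complete-deterministic}: given a relaxation sequence $\sigma$, I would build $\pi$ greedily, at step $j$ choosing the next $B$-edge $v_j\to u_j$ so as to maximize the position $s_j$ in $\sigma$ of the relaxation that first sets $u_j$ correct (with $s_0=0$). The combinatorial content is a lower bound on the number of candidate $B$-edges at step $j$: the $c-j+1$ unused $T$-vertices each have $d$ out-edges in $B$, and at most $jd$ of those out-edges can land on the $j$ already-used $S$-vertices, leaving at least $(c-2j+1)d$ candidates. As in the complete case, the greedy choice forces $\sigma$ to relax all of these candidates between positions $s_{j-1}$ and $s_j$ — otherwise rerouting $\pi$ through an unrelaxed candidate (still completable to a simple path, by non-blocking) would push $s_j$ later, and if such a candidate were never relaxed after $s_{j-1}$ the rerouted weighting would be computed incorrectly. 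Hence $s_j-s_{j-1}\ge(c-2j+1)d$, and summing over $j\le\lfloor c/2\rfloor$ gives a lower bound of $\Omega(dc^2)$ on the length of $\sigma$. The randomized bound I would obtain exactly as in \cref{thm:complete-random}, via \cref{cor:red,cor:yao}, from the weight distribution that chooses each $v_j\to u_j$ uniformly among its candidates: conditioning on the first $j-1$ choices, $\sigma$ can do no better than relaxing the candidates immediately after $s_{j-1}$, so $E[s_j-s_{j-1}]\ge\tfrac12(c-2j+1)d$, again summing to $\Omega(dc^2)$.

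Finally I would pick parameters in two regimes. When $m=\Omega(n^{1+\varepsilon})$, take $N$ from \cref{cor:clos} (with $O(c)$ vertices and $O(c^{1+\varepsilon/2})$ edges) and set $c=\Theta(n)$, $d=\Theta(m/n)$: then $d\le c$ because $m=O(n^2)$, the network contributes only $o(m)$ edges, and $\Omega(dc^2)=\Omega(mn)$. Otherwise take $N$ from \cref{lem:expander} (with $O(c\log c)$ vertices and edges) and set $c=\Theta(n/\log n)$, $d=\Theta(m\log n/n)$: then $d\le c$ whenever $m=O(n^2/\log^2 n)$, which holds in this regime, the network contributes only $O(n)$ vertices and edges, and $\Omega(dc^2)=\Omega(mn/\log n)$. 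The two regimes cover the whole range $n\le m\le 2\binom n2$, since $m=\omega(n^2/\log^2 n)$ already forces $m=\Omega(n^{3/2})$. I expect the main obstacle to be making the counting step fully rigorous: one must verify that the non-blocking property really does make $N$ \emph{transparent} for the greedy argument — that every greedy choice, and every rerouting used in the contradiction argument, can be completed to a valid simple path, and that the $O(\log c)$ (resp.\ $O(1)$) relaxations forced inside each network segment are genuinely lower order than the $\Theta((c-2j)d)$ relaxations charged per step, so that they may be ignored. The parameter bookkeeping needed to land exactly $m$ edges and $n$ vertices while preserving $d\le c$ is a secondary but fiddly point.
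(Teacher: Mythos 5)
Your graph family and parameter choices are essentially the paper's (a biregular part of degree $d\approx m/2c$ from $T$ to $S$, a rearrangeable non-blocking network from $S$ to $T$, with $c=\Theta(n)$ via \cref{cor:clos} in the dense regime and $c=\Theta(n/\log n)$ via \cref{lem:expander} otherwise), and your per-step candidate count $\Omega\bigl((c-2j)d\bigr)$ matches the paper's observation that $\Omega(m)$ disjoint edges remain available during the first $\Omega(c)$ choices. The gap is in the step you yourself flag as the ``main obstacle,'' and it is not a formality. You define $s_j$ as the position in $\sigma$ at which $u_j$ actually becomes correct, and you run the greedy (resp.\ averaging) argument on these quantities, rerouting through the network in the exchange step. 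But a \emph{rearrangeable} non-blocking network only guarantees vertex-disjoint paths for the complete final system of input--output pairs; it does not let you fix a routing for the first $j-1$ segments and then extend it by one more pair. So when the greedy process appends, or the exchange argument replaces, the $j$th edge, the routing of the earlier segments may have to change, which changes the times at which $u_1,\dots,u_{j-1}$ become correct under $\sigma$: the quantities $s_1,\dots,s_{j-1}$ are neither determined at the time of the $j$th choice nor stable under the rerouting used in your contradiction, so the telescoping of the per-step inequalities does not go through as stated. (Your other worry, that the relaxations inside each network segment be of lower order, is a red herring: those steps only lengthen each interval and never need to be counted.)

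The paper's proof is organized around exactly this point. It never tracks actual correctness times under $\sigma$. Instead it fixes only the sequence of pairwise-disjoint biregular edges (chosen one at a time, at random), lets $\tau$ be the subsequence of $\sigma$ consisting of biregular-edge relaxations, and compares $\sigma$ with an idealized sequence that relaxes the entire non-blocking network after each step of $\tau$; the reduced cost of $\sigma$ is at least the number of $\tau$-steps this idealized sequence needs, and for the idealized sequence the correctness time $t_i$ of $u_i$ depends only on $\tau$ and on the chosen biregular edges, not on the routing. Rearrangeability is invoked exactly once, at the very end, to splice the chosen edges into a single path and define the weights. Your argument becomes correct if you replace your $s_j$ by this routing-independent recurrence (equivalently, charge only biregular relaxations and treat the network as relaxed for free between them); as written it implicitly assumes an incremental routing property that the networks of \cref{lem:expander} and \cref{cor:clos} are not guaranteed to have. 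A further, cosmetic difference: the paper derives the deterministic bound from the same distributional argument (worst case at least expectation, via \cref{cor:red} and \cref{cor:yao}) rather than a separate greedy adversary; your greedy variant would also work once recast in the routing-independent measure.
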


\begin{proof}
We construct a graph according to the construction outlined above, in which we choose a capacity $c$, set up two disjoint sets $S$ and $T$ of $c$ vertices, connect $T$ to $S$ by a biregular bipartite digraph of some degree $d$, and connect $S$ to $T$ by a rearrangeable non-blocking network of capacity $c$. We allocate at least $m/2$ edges to the biregular graph, and the rest to the non-blocking network, giving $d\approx m/2c$. For the $\Omega(mn/\log n)$ bound, we use the non-blocking network of \cref{lem:expander}, with $c=\Theta(n/\log n)$. For the $\Omega(mn)$ bound, we use the non-blocking network of \cref{cor:clos}, with $c=\Theta(n)$. In both cases, we can choose the parameters of these networks to achieve these asymptotic bounds without exceeding the given numbers $n$ and $m$ of vertices and edges. We pad the resulting graph with additional vertices and edges in order to make the numbers of vertices and edges be exactly $n$ and $m$, and set the weights of these padding edges to be high enough that they do not interfere with the remaining construction.

Next, we choose a random distribution on weights for the resulting network so that, for every relaxation sequence $\sigma$, the expected reduced cost of $\sigma$, for weights from this distribution, matches the lower bound in the statement of the lemma. For deterministic non-adaptive relaxation algorithms, this will give the desired lower bound directly, via the simple fact that the worst case of any distribution is always at least its expectation. For randomized algorithms, the lower bound will follow using \cref{cor:red} and \cref{cor:yao} to convert the lower bound on expected reduced cost into a high-probability lower bound.

As in \cref{thm:complete-random}, the random distribution on weights that we use is determined from a random distribution on paths from the source, such that the shortest path tree for the weighted graph will contain the chosen path. We can accomplish this by setting the lengths of the path edges to zero and all other edge lengths to one. Unlike in \cref{thm:complete-random}, these paths will not necessarily include all vertices in the graph and the shortest path tree may contain other branches. To choose a random path, we simply choose a sequence of edges in the biregular graph, one at a time, in order along the path. In each step, we choose uniformly at random among the subset of edges in the biregular graph that are disjoint from already-chosen edges. Because of the biregularity of the biregular part of our graph, each chosen edge is incident to at most $2(d-1)$ other edges, and eliminates these other edges from being chosen later. At least $c/2$ choices are possible before there are no more disjoint edges, and throughout the first $c/4$ choices there will remain at least $m/4$ edges to choose from, disjoint from all previous edges. The sequence ends when there are no more such edges to choose. Once we have chosen this sequence of edges from the biregular graph, we construct a set of vertex-disjoint paths in the rearrangeable nonblocking network that connects them in sequence into a single path.

For any given relaxation sequence $\sigma$, as in the proof of \cref{thm:complete-random}, let $\tau$ be the subsequence of edges in $\sigma$ that belong to the biregular part of the graph, and consider a modified relaxation algorithm that, after relaxing each edge in $\tau$, immediately relaxes all edges of the non-blocking network. Define the reduced cost for $\tau$ to be the number of relaxation steps made from $\tau$ before all distances are correct, not counting the relaxation steps in the non-blocking network. Clearly, this is at most equal to the reduced cost for $\sigma$, because $\sigma$ might fail to relax a path in the non-blocking network when $\tau$ succeeds, causing the computation of shortest path distances using $\sigma$ to fall behind that for $\tau$. Define $t_i$ to be the step in the relaxation sequence for $\tau$ that relaxes the $i$th chosen edge from the biregular graph, making the distance to its ending vertex correct. Then the expectation of $t_i-t_{i-1}$ (conditioned on the choice of the first $i-1$ edges  is at least the average, over all edges that were available to be chosen as the $i$th edge, of the number of steps along $\tau$ from $t_{i-1}$ to the next occurrence of that edge. This expectation is minimized when the edges occurring immediately following position $t_{i-1}$ in $\tau$ are exactly the next available edges, and is equal to half the number of available edges; for other possibilities for $\tau$, the expectation can only be even larger. The expected reduced cost for~$\tau$ equals the sum of these differences $t_i-t_{i-1}$. Since there are $\Omega(c)$ steps in which the number of available edges is $\Omega(m)$, the expected reduced cost for $\tau$ is $\Omega(cm)$. The expected reduced cost for~$\sigma$ can only be larger, and plugging in the value of $c$ (coming from our choice of which type of non-blocking network to use) gives the result.
\end{proof}

\section{Conclusions and Open Problems}

We have shown that, for a wide range of choices for $m$ and $n$, the Bellman–Ford algorithm is asymptotically optimal among non-adaptive relaxation algorithms. Adaptive versions of the Bellman–Ford algorithm are faster, but only by constant factors. Is it possible to prove that, among adaptive relaxation algorithms, Bellman–Ford is optimal? Doing so would require a careful specification of what information about the results of relaxation steps can be used in choosing how to adapt the relaxation sequence.

The constant factors of $\tfrac16$ and $\tfrac{1}{12}$ in our deterministic and randomized lower bounds for complete graphs are far from the constant factors of $\tfrac12$ and $\tfrac13$ in the corresponding upper bounds. Can these gaps be tightened? Is it possible to make them tight enough to distinguish deterministic and randomized complexity? Alternatively, is it possible to improve the deterministic methods to match the known randomized upper bound?

For sparse graphs ($m=O(n)$), our lower bound falls short of the Bellman–Ford upper bound by a logarithmic factor. Can the lower bound in this range be improved, or can the Bellman–Ford algorithm for sparse graphs be improved?

In this work, we considered the worst-case number of relaxation steps used by non-adaptive relaxation algorithms for the parameters $m$ and $n$. But it is also natural to look at this complexity for individual graphs, with unknown weights. For any given graph, there is some relaxation sequence that is guaranteed to find shortest path distances for all weightings of that graph, with as few relaxation steps as possible. An algorithm of Haddad and Schäffer~\cite{HadSch-SIDMA-88} can find such a sequence for the special case of graphs for which it is as short as possible, one relaxation per edge. What is the complexity of finding or approximating it more generally?
 
\section*{Acknowledgements}
This research was supported in part by NSF grant CCF-2212129.

\date{ }

\bibliographystyle{plainurl}
\bibliography{bellford}

\end{document}